\newcommand{\drop}[1]{}
\newcommand{\ncgmf}[0]{\textsc{nc-gmf}}
\newcommand{\gmf}[0]{\textsc{gmf}}
\newcommand{\ftp}[0]{\textsc{ftp}}
\theoremstyle{definition}
\newtheorem{theorem}{Theorem}
\newtheorem{definition}[theorem]{Definition}
\newtheorem{property}[theorem]{Property}
\newcommand{\equals}{\stackrel{\mathrm{def}}{=}}
\newcommand{\RBF}{\textsc{rbf}}
\newcommand{\MRBF}{\textsc{mrbf}}
\newcommand{\LB}{\textsc{lb}}
\newcommand{\UB}{\textsc{ub}}
\newcommand{\task}{\tau}
\title{Sufficient FTP Schedulability Test for the Non-Cyclic Generalized Multiframe Task Model}
\author{
%	Sanjoy Baruah\\ University of North Carolina \\Chapel Hill, USA \and
	Vandy Berten\\ Universit\'e Libre de Bruxelles \\ Brussels, Belgium \and
	Jo\"el Goossens \\ Universit\'e Libre de Bruxelles \\ Brussels, Belgium
	}
\begin{document}
\maketitle

\section{Introduction}
%\section{The Task Model System Model and Definitions}
In this paper we consider the schedulability of the Non-Cyclic Generalized Multiframe Task Model System Model defined by Moyo et al.~\cite{moyo2010schedulability} (\ncgmf{} in short). Moyo et al.\@ provided a sufficient EDF schedulability test, Baruah extended the model by considering the Non-cyclic recurring task model~\cite{baruah2010non} and provided an exact feasibility test. Stigge et al.\@ further extend this model and consider Digraph real-time task model~\cite{digraph}, and consider the feasibility (or, equivalently, the EDF-schedulability) of their task model.

Formally, an \ncgmf{} task set $\task$ is composed of $M$ tasks $\{\task_1, \dots, \task_M\}$. An \ncgmf{} task $\task_i$ is characterized by the 3-tuple $(\overrightarrow{C_i}, \overrightarrow{D_i}, \overrightarrow{T_i})$ where $\overrightarrow{C_i}$, $\overrightarrow{D_i}$, and $\overrightarrow{T_i}$ are $N$-ary vectors $[C_i^1, C_i^2, \ldots, C_i^{N_i}]$  of execution requirement, $[D_i^1, D_i^2,\ldots, D_i^{N_i}]$ of (relative) deadlines, and $[T_i^1, T_i^2, \ldots,T_i^{N_i}]$ of minimum separations, respectively. Additionally each task makes an initial release not before time-instant $O_i$. 

The interpretation is as follows: each task $\tau_i$ can be in $N_i$ different configurations, where configuration $j$ with $j \in \{1, \dots, N_i\}$ corresponds to the triplet $(C_i^j, D_i^j, T_i^j)$. We denote by $\ell_i^k \in \{1, \dots, N_i\}$ the configuration of the $k^\text{th}$ frame of task $\tau_i$. The first frame of task $\task_i$ has an arrival time $a_i^1 \geq O_i$, an execution requirement of $C_i^{\ell_i^1}$, an absolute deadline $a_i^1+D_i^{\ell_i^1}$ and a minimum separation duration of $T_i^{\ell_i^1}$ with the next frame. 
%The second frame has an arrival time $a_i^2 \geq a_i^1 +  T_i^{\ell_i^1}$, an execution requirement of $E_i^{\ell_i^2}$, an absolute deadline $a_i^2+D_i^{\ell_i^2}$ and a minimum separation duration of $T_i^{\ell_i^2}$ with the next frame. In general, 
The $k^\text{th}$ frame has an arrival time $a_i^k \geq a_i^{k-1}+T_i^{\ell_i^{k-1}}$, an execution requirement of $E_i^{\ell_i^k}$, an absolute deadline $a_i^k+D_i^{\ell_i^k}$ and a minimum separation duration of $T_i^{\ell_i^k}$ with the next frame.

It is important to notice that $\ell_i^k$ can be any value in $\{1, \dots, N_i\}$ for $i\in \{1, \dots, M\}$ and $k \in \mathbb{N}^+_0$. I.e., the exact order in which the different kinds of jobs are generated is not know at design-time, it is \emph{non-cyclic}. In this way this model generalizes the \gmf{} model~\cite{Baruah1999Generalized-mul}.

In this research we consider the scheduling of \ncgmf{} using Fixed Task Priority (\ftp) schedulers and without loss generality $i<j$ implies that $\task_{i}$ has a higher priority than $\task_{j}$. Notice that, while having a model and solutions close to ours, \cite{digraph} considers dynamic priority tasks.
% (notice that fixed priority is usually preferred by embedded device builders \cite{TODO}). 
Even if a very similar mechanism (the request bound function) works very well with dynamic priorities allowing one to provide necessary and sufficient feasibility tests, it introduces some pessimism with fixed priorities, leading to a sufficient schedulability test.

\noindent\textbf{This research.} Our goal is to provide a sufficient schedulability test ---ideally polynomial--- for the scheduling of \ncgmf{} using \ftp{} schedulers. We report two first results: (i) we present and prove correct the critical instant for the \ncgmf{} then (ii) we propose an algorithm which provides a sufficient (but pseudo-polynomial) schedulability test. 

\section{Critical Instant}

First notice that the popular critical instant notion (see Definition~\ref{def:crit}) is not really applicable in our task model.

\begin{definition}\label{def:crit} A \emph{critical instant} of a task $T_{i}$ is a time instant such that: the job of $T_{i}$ at this instant has the maximum response time of all job of $T_{i}$.
\end{definition} 

This notion is difficult to extend to non-cyclic model in the sense that the workload (task requests) is not unique, contrary to the Liu and Layland or the (cyclic) \gmf{} task model.

Anyway, the critical instant is important because it provides the \emph{worst-case scenario} from $\tau_i$ standpoint, in the next section we will show that the synchronous case remains the worst case for the more general model namely the \ncgmf.

\begin{definition} Let us consider an \ncgmf{} task set $\{\task_i : i \in [1, \dots, M]\}$, where $\task_i$ is described by  
$$\{(C_i^1, D_i^1, T_i^1), \dots, (C_i^{N_i}, D_i^{N_i}, T_i^{N_i})\}.$$

The \emph{scenario of task $\task_i$}, denoted by $\mathcal{S}_i$, is a value $a_i^1\ge O_i$ (the initial release time) associated with an infinite sequence of tuples  $\{(\ell_i^k, \pi_i^k): k=1, 2\dots\}$, with:
\begin{itemize}
\item $\ell_i^k \in \{1, \dots, N_i\}$,
\item $\pi_i^k \ge T_i^{\ell_i^k}$.
\end{itemize}

If task $\task_i$ follows the scenario $\mathcal{S}_i$, then:
\begin{itemize}
\item the first job is released at time $a_i^1$, has an execution requirement of $C_i^{\ell_i^1}$ and a relative  deadline $D_i^{\ell_i^1}$,
%\item the second job is released at time $a_i^2 =  a_i^1 + \pi_i^1$, has an execution requirement of $C_i^{\ell_i^2}$ and a relative deadline $D_i^{\ell_i^2}$
%\item ...
\item the $k^\text{th}$ job ($k>1$) is released at time $a_i^k = a_i^{k-1} + \pi_i^{k-1}$,  has an execution requirement of $C_i^{\ell_i^k}$ and a relative deadline $D_i^{\ell_i^k}$.
\end{itemize}
\end{definition}

% Notice that there are infinitely many scenarios. %, and ${N^i}^n$ possible scenario of length $n$.

In this work we consider offset independent scenarios with the following definition:
\begin{definition} A scenario $\mathcal{S}_i$ is said to be \emph{offset independent} if it depends neither on $a_i^1$, nor on the schedule. In other words, if we shift by $\Delta$ the first arrival, all arrivals will be shifted the same way, and the jobs will keep the same characteristics.

A scenario $\widehat{\mathcal{S}}_i$ obtained in shifting $\mathcal{S}_i$ by $\Delta$ is  described by:
\begin{eqnarray*}
\widehat{a}_i^1 &=& a_i^1 - \Delta \\
\widehat{\ell}_i^k &=& \ell_i^k ~~\forall k \\
\widehat{\pi}_i^k &=& \pi_i^k ~~\forall k 
\end{eqnarray*}
\end{definition}

\begin{definition}[From \cite{RTCSA00}]
An \emph{idle instant} occurs at time $t$ in a schedule if all jobs arriving strictly before $t$ have completed their execution before or at time $t$ in the schedule.
\end{definition}

\begin{theorem}\label{thm:worstcase}
For any task $\tau_i$ the maximum response time occurs when all higher priority tasks release a new task request synchronously with $\tau_i$ and the subsequent requests arrive as soon as possible.
\end{theorem}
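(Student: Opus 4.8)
The plan is to establish the result by a Liu--Layland-style exchange argument, transforming an arbitrary legal scenario into the synchronous, densely-released one without ever decreasing the response time of a tagged job of $\tau_i$. First I would fix an arbitrary offset-independent scenario for every task, together with an arbitrary job $J$ of $\tau_i$ released at some instant $a$ and having response time $R$. Using the idle-instant definition, I would isolate the level-$i$ busy period $[t_s, a+R)$ that ends with the completion of $J$, where $t_s \le a$ is the latest idle instant for the priority-$\ge i$ workload; throughout this interval the processor runs only jobs of $\tau_1, \dots, \tau_i$, so that $R = C_i^{\ell} + I$, where $\ell$ is the configuration of $J$ and $I$ is the total higher-priority interference accumulated in $[a, a+R)$ (plus any backlog carried at $a$).

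The first phase synchronizes the higher-priority tasks. For each $\tau_h$ with $h < i$, I would shift its entire release sequence --- legitimate because scenarios are offset independent --- so that one of its releases coincides with $a$. The claim is that this shift does not decrease the interference $I$: a job of $\tau_h$ released strictly before $a$ may have already partly executed, so only its remaining portion delays $J$, whereas releasing it exactly at $a$ lets its full execution requirement fall inside the window; releasing it after $a$ merely defers the ensuing burst and so cannot help. Aligning the first interfering job of each $\tau_h$ with $a$ is therefore weakly worst, and after this phase every higher-priority task releases synchronously with $J$.

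The second phase densifies the arrivals and selects the worst configurations. I would reduce every inter-release separation of the higher-priority tasks to its minimum, i.e.\ set $\pi_h^k = T_h^{\ell_h^k}$, and then choose, frame by frame, the configuration $\ell_h^k$ that maximizes the cumulative execution requirement delivered into $[a, a+R)$. The key observation is a monotonicity property of the request bound function: compressing separations and picking demand-maximizing frames can only raise $\RBF_h$ over the finite window, hence can only raise $I$, hence can only raise $R$. The resulting scenario is exactly the one in the statement --- all higher-priority tasks release together with $\tau_i$ at $a$, with subsequent requests as early as the minimum separations allow --- and its response time is at least the original $R$; since $R$ was arbitrary, this scenario attains the maximum.

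The main obstacle I anticipate is the interaction of the two degrees of freedom introduced by non-cyclicity: unlike the Liu--Layland or cyclic \gmf{} setting, neither the arrival pattern nor the configuration sequence of a higher-priority task is fixed, so the worst case of each $\RBF_h$ is itself an optimization over all frame sequences. The crux is thus proving dominance at the level of the cumulative demand curve --- that no non-aligned or sparsely-separated arrangement of frames can out-deliver the aligned, minimally-separated one over $[a, a+R)$ --- rather than reasoning job by job. A secondary subtlety is justifying that taking $t_s = a$ (no priority-$\ge i$ backlog just before the synchronous release) is indeed worst, which rests on the fact that any higher-priority or earlier-$\tau_i$ work completed before $a$ does not delay $J$, so it is never advantageous to place such work outside the window.
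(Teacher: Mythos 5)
Your overall shape---an exchange argument that only ever weakly increases the tagged job's response time---is right, and your identification of the latest idle instant $t_s$ is exactly the ingredient the paper's proof relies on. The gap is that you then abandon $t_s$ and align the higher-priority releases with $a$, the release instant of $J$, arguing that a partially executed carry-in job delays $J$ by less than a job released exactly at $a$. This is precisely the step of the original Liu--Layland argument that the paper's proof is explicitly written to correct (see its footnote): moving a release of $\tau_h$ from $a-\delta$ rightwards to $a$ drags every subsequent release of $\tau_h$ rightwards by $\delta$ as well (offset independence cuts both ways), so the cumulative higher-priority demand over the prefixes $[t_s,t]$ does not increase pointwise, and the claimed monotonicity of the interference $I$---and hence of $R$, which is defined by a fixed point---does not follow. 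The paper avoids this by doing the opposite: it first (artificially) re-anchors $J$'s release at the idle instant $t_{-1}=t_s$, which can only increase the measured response time since the completion instant $t_R$ is unchanged, and then performs only \emph{left} shifts of the higher-priority tasks so that their first release at or after $t_{-1}$ moves back to $t_{-1}$; a left shift increases the demand in every window $[t_{-1},t]$, and a separate check confirms that $t_{-1}$ remains an idle instant. Your ``secondary subtlety'' about the backlog at $a$ is in fact the central difficulty, and the sentence you offer for it (work completed before $a$ does not delay $J$) does not address the work that is \emph{not} completed by $a$.

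A second, smaller problem: your second phase also selects, frame by frame, the demand-maximizing configurations $\ell_h^k$. The theorem concerns only arrival times for a fixed configuration sequence, and this extra step is unsound as stated: the paper later stresses that no single scenario dominates all others pointwise (which is exactly why it introduces $\MRBF$ as a pointwise maximum rather than as a maximizing scenario), and your ``worst'' configuration sequence depends on the window $[a,a+R)$, whose right endpoint moves as you modify the scenario. Restrict the argument to left-shifting arrival times with the configuration sequences held fixed, anchored at the idle instant, and the proof goes through.
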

\begin{figure}
\centering \def\svgwidth{0.6\linewidth} 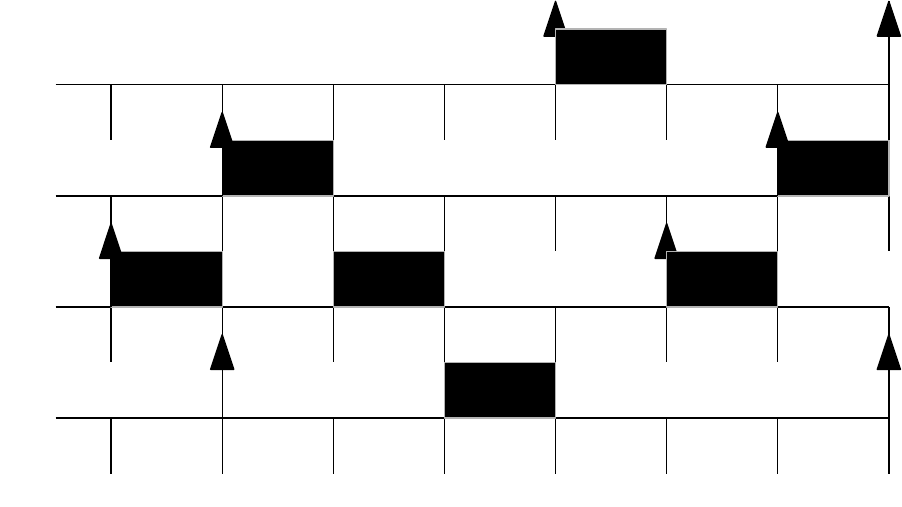
\caption{\label{fig:schedule} $t_{-1}, t_{0}$ and $t_{R}$.} 
\end{figure}

\begin{proof}
We\footnote{This proof is based upon lecture slides, the authors thank S.~Funk and J.~Anderson. It corrects~\cite{LiuLay73,Liu2000Real-Time-Syste} for the popular sporadic task model.} consider a job $J_{i}$ in a given scenario and we will show that, by shifting the job  releases of higher priority tasks, in the synchronous case the response time of the job $J_{i}$ does not decrease. Since that property holds whatever the scenario we can conclude that the worst case response time occurs in a synchronous scenario, hence the property.

Let $t_{0}$ be the $J_{i}$ release time. Let $t_{-1}$ be the latest idle instant for the scheduling of $\task_{1}, \ldots, \task_{i}$ at or before $t_{0}$. Let $t_{R}$ denote the time instant when $J_{i}$ completes. (See Figure~\ref{fig:schedule}.)

First, if we (artificially) redefine $J_{i}$ release time to be $t_{-1}$ then $t_{R}$ remains unchanged but the $J_{i}$ response time may increase.

Assume now that $\task_{i}$ releases a job at $t_{-1}$ and some higher priority task $\task_{j}$ does not. If we left-shift $\task_{j}$ such that its first job is released at time $t_{-1}$ then $J_{i}$ response time remains unchanged or is increased (since we consider offset independent scenarios). We have constructed a portion  of the schedule that is identical to the one which occurs at time $t_{-1}$ when $\task_{1}, \ldots, \task_{i}$ release job together. This shows that asynchronous releases cannot cause larger response time than synchronous release (for the same scenario).

It remains to show that the left-shift does not impact on $t_{-1}$. It may be noticed that no job shift pasts $t_{-1}$, every job that starts after $t_{-1}$ continues to do so after the shift by construction. Similarly, any job that starts before $t_{-1}$ continues to do so after the shift and $t_{-1}$ remain an idle instant: for each interval $[x,t_{-1}]$, no job shift into this interval, no left shifts cross $t_{-1}$, therefore the total demand during $[x, t_{-1}]$ cannot increase, therefore the last completion time prior to $t_{-1}$ cannot be delayed. 

The same argument (i.e., the left-shift) can be used to complete the proof and show that the worst case response time occurs when the subsequent requests arrive as soon as possible.
\end{proof}

\section{Schedulability Test}

\subsection{Request Bound Function}

\begin{figure*}
\centering \def\svgwidth{\linewidth} 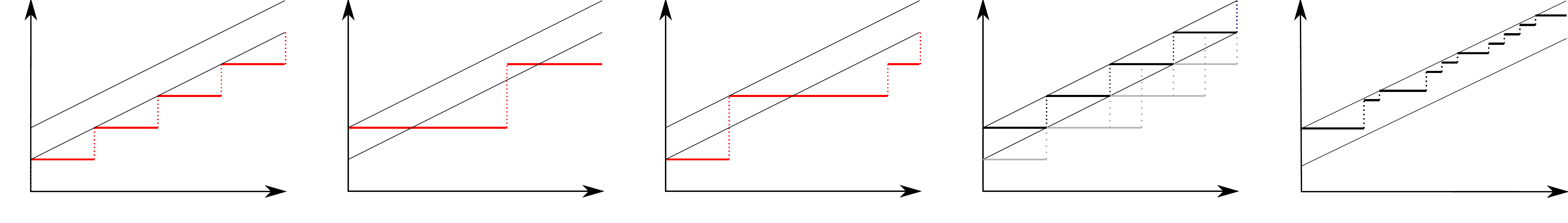
\caption{\label{fig:requestfunctions} (a, b, c): Request bound functions (\RBF) for a task $\task_i$  with $\protect\overrightarrow{C_i} = (1,2)$  and $\protect\overrightarrow{T_i} = (2, 5)$. In (a): $\mathcal{S} = (1, 1, 1, 1, ...)$, in (b):  $\mathcal{S} = (2, 2, ...)$, in (c): $\mathcal{S} = (1, 2, 1, 1, ...)$. (d): Maximum request bound functions (\MRBF) (all the request bound functions are in light gray). (e): Maximum request bound function for task $\tau_i$  with $\protect\overrightarrow{C_i} = (4, 6, 3)$  and $\protect\overrightarrow{T_i} = (5, 10, 4)$. The two oblique lines are $\UB_i$ and $\LB_i$ (using Equation~\ref{eq:LB}).}
\end{figure*}

In order to provide a schedulability test, we will use a similar approach to the one used by \cite{Tindell, RBF, digraph}, in order to bound the computation demand that a task can request. From the previous section, we know that a scenario has its worst demand when the first request arrives at time $0$, and all other requests arrive as soon as possible. I.e., $a_i^1 = 0$ and $\pi_i^k = T_i^{\ell_i^k}$. This kind of scenario will be called a \emph{dense scenario}. 

In the following, we define a \emph{sporadic scenario} as a scenario were exact release times are let free. A sporadic scenario is then determined by a sequence of configuration numbers $\{\ell_i^1, \ell_i^2, \dots\}$ with $\ell_i^k \in \{1, \dots, N_i\} ~\forall i, k$. Theorem~\ref{thm:worstcase} says then amongst all the possible arrival pattern of a sporadic scenario, the dense scenario is the worst.

From this, we provide a new definition:
\begin{definition}
Given a sporadic scenario $\mathcal{S}_i$ of a task $\task_i$, we define the \emph{Request Bound Function} $\RBF_{\mathcal{S}_i}(t)$ as the maximal total demand that $\tau_i$ can request in the interval $[0, t]$. Formally, 
$$
\RBF_{\mathcal{S}_i}(t) = \sum_{k=1}^{\alpha(t)} C_i^{\ell_i^k} \text{~with~} \alpha(t) = \min\{\beta\geq 0 \mid \sum_{k=1}^\beta T_i^{\ell_i^k} \geq t\}.
$$
\end{definition}

In the single frame model, where there is only one possible (sporadic) scenario, this corresponds to the Request Bound Function~\cite{RBF}. In the framework of feasibility for dynamic priorities, authors of \cite{digraph} consider Demand Bound Function, instead of Request Bound Function.

Some example of such request bound functions are shown in Figure~\ref{fig:requestfunctions}(a)--\ref{fig:requestfunctions}(c). We can observe that a request bound function is always composed of a sequence of connected steps, where the $k^\text{th}$ step is composed of a vertical jump of $C_i^{\ell_i^k}$, followed by a horizontal segment of length $T_i^{\ell_i^k}$.

Let us now assume that the (sporadic) scenario is fixed for all tasks $\task_1, \dots, \task_{i-1}$, is proved to be schedulable, and we want to know if the system is still schedulable if we add a task $\task_i$. Adapting the results from \cite{Tindell}, we need to check that, if starting synchronously with all higher priority tasks, $\task_i$ can still meet its deadline. As $\task_i$ can have different configurations, we need to find the smallest solution to the following $N_i$ inequalities:  
\begin{equation}\label{eq:RBF}
t = C_i^k + \sum_{j<i} \RBF_{\mathcal{S}_j}(t)~~\forall k \in \{1, \dots, N_i\}.
\end{equation}
If we now want to check the schedulability of any task system, we need to raise the hypothesis that the scenarios are known, and consider all the possible combinations of scenarios. 
%Of course, we would only need to consider scenario prefixes having a length (sum of periods) shorter than the largest deadline, but still, the number of combinations is exponential, 
But checking the schedulability through this method would be impossible within a reasonable amount of time, even for small problem instances.

%If $\delta$ is the maximum deadline, we need to consider all shortest the scenario prefixes of all tasks having a sum of period larger than $\delta$. An upper bound of this number for task $\task_i$ is 
%$$(N_i)^{\left\lceil\frac{\delta}{T_i^{\min}} \right\rceil}$$
%where $T_i^{\min} = \min_k T_i^k$. As we need to combine all the combinations of $\tau_1$ with all the combination for $\tau_2$ with all ..., an upper bound of the number of times we need to find a solution to Equation~\eqref{eq:RBF} is :
%$$\prod_{i=1}^M (N_i)^{\left\lceil\frac{\delta}{T_i^{\min}} \right\rceil}$$

In the following, we propose to bound all the possible request bound functions for a given task, and to use this bound instead of all possible scenarios. This will make the problem tractable, but at the price of loosing the exactness of the schedulability test (as explained in Section~\ref{sect:pessimism}). 

\subsection{Maximum Request Bound Function}
%The request bound function presented in the previous section provides, for a given (sporadic) scenario, the maximal request that a task can demand in the interval $[0, t)$. 
In the following, we will focus on the maximal request that a task can produce in the interval $[0, t)$, whatever the scenario.

\begin{definition}
The \emph{Maximum Request Bound Function} $\MRBF_i(t)$ of a task $\task_i$ represents an upper bound on the demand that task $\task_i$ can request up to any time $t$. Formally,
$$\MRBF_i(t) = \max_{\sigma \in \{\mathcal{S}_i\}} \RBF_\sigma(t)
$$
where $\{\mathcal{S}_i\}$ represents all the possible dense scenarios of $\task_i$.
\end{definition}

Notice the \MRBF{} is not the maximum of all the \RBF's (such a maximum does not exist in general), but a function which, for each $t$, gives the maximum of all $\RBF(t)$.
%Notice also that, as we stated before, there are infinitely many dense scenarios having a value at $t$. However, there is only a finite number of possible values at time $t$, because there is a only finite number of scenario prefixes giving a distinct \RBF{} before $t$.

Examples of such \MRBF{} functions can be seen in Figures~\ref{fig:requestfunctions}(d)--\ref{fig:requestfunctions}(e). %Notice that, in general, the \MRBF function does often not correspond to any request function : in the example of Figure~\ref{fig:requestfunctions}(d), each step correspond to a different request function. Figure~\ref{fig:requestfunctions}(e) contains many steps of length $1$, while none of the configurations has a period of $1$.
Before explaining how this $\MRBF$ can be practically computed, here is how we use it:

\begin{theorem}
An \ncgmf{} task set $\task$ is schedulable using an \ftp{} assignment if, $\forall i \in \{1, \dots, M\}, \forall k \in \{1, \dots, N_i\}$, the smallest positive solution to the equality
$$
t = C_i^k + \sum_{j<i} \MRBF_j(t)
$$
is not larger than $D_{i}^{k}$.
\end{theorem}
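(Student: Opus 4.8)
The plan is to fix an arbitrary job, reduce its worst-case response time to the fixed-point equation driven by the \RBF{}s of the higher priority tasks, and then dominate that fixed point by the \MRBF{} fixed point using the fact that $\MRBF$ is a pointwise upper envelope of the \RBF{}s. First I would fix a task $\task_i$, a configuration $k\in\{1,\dots,N_i\}$, and an arbitrary scenario of the whole system in which some job $J_i$ is released in configuration $k$, so that its relative deadline is exactly $D_i^k$. By the critical instant result (Theorem~\ref{thm:worstcase}), the response time of $J_i$ is no larger than the response time it would have in the synchronous dense scenario, where every higher priority task $\task_j$ ($j<i$) releases a job simultaneously with $J_i$ and issues its subsequent requests as early as possible, each $\task_j$ following some dense scenario $\mathcal{S}_j$. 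In that synchronous dense configuration the standard level-$i$ busy-period / response-time argument (adapting \cite{Tindell}) shows that the response time of $J_i$ equals the smallest positive solution $R$ of
$$ t = C_i^k + \sum_{j<i}\RBF_{\mathcal{S}_j}(t), $$
because $\RBF_{\mathcal{S}_j}(t)$ is precisely the cumulative demand that $\task_j$ places in $[0,t]$ along $\mathcal{S}_j$.

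Next I would compare this scenario-dependent fixed point with the \MRBF{} fixed point. Let $\overline{R}$ be the smallest positive solution of $t = C_i^k + \sum_{j<i}\MRBF_j(t)$, which by hypothesis satisfies $\overline{R}\le D_i^k$. By definition $\MRBF_j(t)=\max_{\sigma}\RBF_\sigma(t)\ge \RBF_{\mathcal{S}_j}(t)$ for every $t$ (since $\mathcal{S}_j$ is itself a dense scenario), so the right-hand side function $\overline{W}(t)\equals C_i^k+\sum_{j<i}\MRBF_j(t)$ pointwise dominates $W(t)\equals C_i^k+\sum_{j<i}\RBF_{\mathcal{S}_j}(t)$. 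Both are non-decreasing step functions whose value at $0^+$ is at least $C_i^k>0$, so both start strictly above the diagonal.

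I then claim $R\le\overline{R}$. Suppose not, i.e.\ $\overline{R}<R$. Since $R$ is the least positive fixed point of the non-decreasing $W$, one has $W(t)>t$ for every $0<t<R$, and in particular $W(\overline{R})>\overline{R}$; but then $\overline{W}(\overline{R})\ge W(\overline{R})>\overline{R}$, contradicting that $\overline{R}$ is a fixed point of $\overline{W}$. Equivalently, the usual fixed-point iteration for $\overline{W}$ dominates the one for $W$ term by term and hence converges to a not-smaller limit. Either way $R\le\overline{R}$.

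Finally, chaining the inequalities gives that the response time of $J_i$ is at most $R\le\overline{R}\le D_i^k$ whenever the hypothesis holds. Since $J_i$, the configuration $k$, the task $\task_i$, and the scenario were all arbitrary, every job of every task meets its deadline, which is exactly schedulability of $\task$ under the chosen \ftp{} assignment. The main obstacle is the monotonicity step: justifying rigorously that pointwise domination of the demand functions transfers to the smallest fixed points for these possibly discontinuous non-decreasing step functions, together with the care needed because $\MRBF_j$ is only a pointwise upper envelope and need not be the \RBF{} of any single realizable scenario—so the bound is sound but may be strict, which is precisely the source of the pessimism the paper acknowledges.
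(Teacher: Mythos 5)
Your proposal is correct and follows essentially the same route as the paper, whose proof is a one-line appeal to Theorem~\ref{thm:worstcase} and the worst-case response time computation of \cite{Tindell}; you simply make explicit the two ingredients the paper leaves implicit, namely the reduction to the synchronous dense scenario and the fact that replacing each $\RBF_{\mathcal{S}_j}$ by its pointwise upper envelope $\MRBF_j$ can only increase the least positive fixed point. The monotonicity argument you give for the fixed-point comparison is sound for these non-decreasing step functions, so nothing further is needed.
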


\begin{proof}
%This is a direct consequence of Theorem~\ref{thm:worstcase} and the well known worst-case response time computation for \ftp{} schedulers~\cite{Tindell}.
This is a consequence of Theorem~\ref{thm:worstcase} and the worst-case response time computation for \ftp{} schedulers~\cite{Tindell}.
\end{proof}

\subsection{Pessimism}\label{sect:pessimism}
As we stated before, this technique introduces some pessimism. Here is a simple example explaining why. Let us consider a system with two tasks, where $\task_1$ is described in Figure~\ref{fig:requestfunctions}(a)--\ref{fig:requestfunctions}(d), and $\task_2$ contains only one configuration with $C_2^1=1$ and $T_2^1=D_2^1=3$. A way of proving that this system is schedulable is to consider the \RBF{} of all possible scenarios for $\task_1$, and to see if $\task_2$ can process its single unit of work within the three first units of time. For every scenario $\sigma$, it boils down to find a $t\le 3$ such that $\RBF_\sigma(t) + C_2^1 \le t$. This is obviously the case for Figure~\ref{fig:requestfunctions}(a) ($t=2$), (b) ($t=3$) and (c) ($t=2$). We let the reader check the other cases. However, we could not find such a $t$ with $\MRBF$, which implies that our test could not conclude the schedulability of the system.

As a future work, we would like to quantify this pessimism, i.e., study how much exactness we loose by considering \MRBF{} instead of all the \RBF's separately. We hope to be able to bound this pessimism by a constant $C$: if our test attests the non-schedulability of a task system $\task$ on a unit speed CPU, then it is for sure not schedulable on a CPU of speed $C$. 

\subsection{Algorithm}

Computing $\MRBF_i(t)$ can be done in a very similar way as in the context of dynamic priorities \cite{digraph}. We present the algorithm (Algorithm~\ref{Alg:MRBF}) here for the sake of completeness. 
%The basic way to compute $\MRBF_i(t)$ is to compute all the (dense) scenarios of $\task_i$ up to $t$, and then to keep the maximal values. However, this requires of course to compute an exponential number of scenarios, and even if this work is performed off-line, this is not reasonable. We will show now that, from two observations, we can strongly reduce the number of scenarios we have to consider.

%As a first observation, we can see that if two scenarios merge, i.e., have a common point where a step of both scenario finish, we do not need to see them as two scenarios after this common point. For instance, a scenario starting with the sequence $(1, 1, 2)$ will merge with $(1, 2, 1)$ as well as $(2, 1, 1)$.

%The second observation needs more theoretical results, that will be presented in the Section~\ref{sec:bounding}. 
%We first present a basic version of this algorithm allowing to compute $\MRBF$ in the Algorithm~\ref{Alg:MRBF}. 
The basic idea is to store in a queue ($\mathcal{Q}$) points where a (dense) scenario can pass (i.e., one step stops at this point. In Figure\ref{fig:requestfunctions}, this corresponds to the rightmost point of a vertical segment). We first add the point $(0,0)$ (line~\ref{alg:Q00}), and then all the points that can be reached from $(0,0)$, i.e., $(T^1, C^1), (T^2, C^2), \dots, (T^N, C^N)$ (lines \ref{alg:forbeg} through \ref{alg:forend}). Of course, we only add a point if it is not already in the queue, as observed in \cite{digraph} (line~\ref{alg:lif}). 

Notice that we only need to get $\MRBF$ until the maximal deadline of tasks with a lower index, represented by $\Delta$ (line~\ref{alg:Delta}).

\begin{algorithm}[ht]
\newcommand{\Q}{\mathcal{Q}}
\newcommand{\adds}{\stackrel{+}{\gets}}
\newcommand{\pop}{\stackrel{\text{pop}}{\gets}}
\caption{Computing $\MRBF_i$}
\label{Alg:MRBF}
\KwData{$[C^1, C^2, \ldots, C^N]$, $[T^1, T^2, \ldots,T^N]$, $\Delta$}

$\MRBF(t) = 0, \forall t\ge 0$\;
$\Q \adds (0,0)$\;				\label{alg:Q00}

\While{$\Q$ not empty}{
	$(t, v) \pop \Q$ ; \tcp{$\Q$ is ordered on $t$}
	\If{$t < \Delta$}{ \label{alg:Delta}
		\ForEach{$C^k, T^k$}{	\label{alg:forbeg}
			$t' \gets t+T^k, v' \gets v+C^k$\;
			$ \forall x \ge t: \MRBF(x) \gets \max\{\MRBF(x), v'\}$\;
			
			\If{$(t', v') \notin \Q$ }{ \label{alg:lif}
				$Q \adds (t', v')$\;  \label{alg:forend}
			
			}
		}
	}
}
\Return $\MRBF$\;
\end{algorithm}

\subsection{Bounding \MRBF}
\label{sec:bounding}
In order to efficiently compute \MRBF, as well as to give some approximation, it is useful to bound this function. Before giving some bound, we will first define some value. The maximal slope is defined as
$U_i^{\max}  \equals \max_k\left\{ \frac{C_i^k}{T_i^k} \right\}$. 
Notice that $U_i^{\max}$ could correspond to several configurations of $\tau_i$. The maximal amount of work is given by $C_i^{\max}  \equals \max_k\left\{ C_i^k \right\}$.

We also need to define $C_i^{U_{\max}}$, the computation requirement corresponding to the maximal slope (or the maximum of them if several configurations correspond to the maximal slope):
$C_i^{U_{\max}} \equals \max_k\left\{ C_i^k : \frac{C_i^k}{T_i^k} = U_i^{\max}\right\}$.

%$$
%\LB_i(t) \equals U_i^{\max} \times t 
%$$
%
%(better LB : 

\begin{property}\label{prop:MRBF1}
$\forall t: \MRBF_i(t) \le \UB_i(t)\equals U_i^{\max} \times t + C_i^{\max}$.
\end{property}

\begin{proof}
In order to prove that $\MRBF_i$ is always below the line $\UB_i$, we will show that no scenario can ever have any point above this line. First, we remind that a scenario is a set of connected steps, each of them composed of one vertical jump of some value $C_i^k$, followed by a horizontal segment of the corresponding $T_i^k$. One step represents the contribution of one job of $\tau_i$ in its configuration $k$.

The proof is done by contradiction. Let us assume that a scenario contains a point $(t, v)$ above the line, i.e., such that $\UB_i(t) < v$. Without loss of generality, we can assume that this point is the left-most point of the segment it belongs to. By construction, this point is in the middle of the contribution of some configuration $k$ of $\tau_i$, and this contribution starts at $(t, v-C_i^k)$. Note that as $C_i^k \le C_i^{max}$, this point is above the line $f(t) = U_i^{\max} \times t$. We should now find a sequence $k_1, k_2, \dots, k_n$ such that 
$\sum_j C_i^{k_j} = v-C_i^k$ and $\sum_j T_i^{k_j} = t$. This is impossible, as it would mean that, in average, the contributions of the jobs would have a slope larger than $U_i^{\max}$, which is a contradiction.
\end{proof}

\begin{property}\label{prop:MRBF2}
$\forall t: \MRBF_i(t) \ge \LB_i(t) \equals U_i^{\max} \times t$.
\end{property}

\begin{proof}
In order to prove this property, we simply build a scenario $\mathcal{S}$ which respects the condition $\forall t: \mathcal{S}(t) \ge \LB_i(t)$. As by definition, $\MRBF_i(t) \ge \mathcal{S}(t)$ for any scenario $\mathcal{S}$ of $\tau_i$, this will make the proof.
The scenario $\mathcal{S}$ consists in choosing infinitely many times the configuration with the highest slope, i.e., the one corresponding to $U_i^{\max}$. Trivially, this scenario ``touches'' $\LB_i$ on the right side of each horizontal segment. 
\end{proof}

With a similar argument, we could refine our lower bound:
\begin{equation}
\label{eq:LB}
\LB_i(t) \equals U_i^{\max} \times t + (C_i^{\max} - C_i^{U_{\max}}).
\end{equation}
We let the formal proof for a longer version of this paper.

%This lower bound of $\MRBF$ is obtained by combining all the scenarios composed of one step using the configuration with the highest vertical jump, i.e., the one corresponding to $C_i^{\max}$, and infinitely many steps using the configuration with the highest slope, i.e., the one corresponding to $U_i^{\max}$. Each of those trajectories makes one step of $\MRBF$. 

\begin{property}\label{prop:belowLB}
No scenario going through $(t,v)$ such that $\LB_i(t) - C^{\max} > v $ can ever coincide with $\MRBF_i$ after $t$.
\end{property}

\begin{proof}
From Prop.~\ref{prop:MRBF1}, a point below $\LB_i - C^{\max}$ will never be part of a scenario reaching $\LB_i$, for the same reason that no scenario of $\tau_i$ starting at $(0,0)$ can ever reach $\UB_i$. As $\MRBF_i$ is always above $\LB_i$ (Prop.~\ref{prop:MRBF2}), a point below $\LB_i - C^{\max}$ will never be part of a scenario contributing in $\MRBF_i$.
\end{proof}

From this last property, we can strongly improve Algorithm~\ref{Alg:MRBF}. As soon as a point is too far from $\MRBF$, it can be ignored. Line \ref{alg:lif} of Algorithm~\ref{Alg:MRBF} can be re-written as:
\begin{equation}\label{eq:obs2}
\textbf{if } t' \times U^{\max} < v' + C^{\max} \textbf{ and } (t', v') \notin \mathcal{Q}
\end{equation} 
With the same argument as in \cite{digraph}, we may prove that Algorithm~\ref{Alg:MRBF} is polynomially bounded in $\Delta$ and $n_i$.

\section{Open Questions and Conclusion}
In this work, we provide a sufficient schedulability test for the scheduling of \ncgmf{} tasks using \ftp{} schedulers.

There is still a lot of work to be done in that direction. Here are a few examples of our plans for the very near future.

\begin{itemize}
%\item Algorithm~\ref{Alg:MRBF} has a complexity that might seem to be exponential at a first look. We however believe that, with the improvement given by Equation~\eqref{eq:obs2}, it has a linear complexity in $\Delta$. We would like to better quantify this complexity, and see how it behaves regarding to other parameters than $\Delta$.
\item We observed that starting from a value we still need to quantify, $\MRBF_i$  follows a periodic pattern of size $C_i^{U_{\max}} \times T_i^{U_{\max}}$. We hope to find a fast way to compute this pattern and when it starts. This might allow strongly improving the complexity of computing $\MRBF_i(t)$.

\item As already stated, our method introduces some pessimism that we would like to quantify, and hopefully bound. 
\item We would like to adapt the technique presented in~\cite{RBF} in order to define a polynomial approximation scheme for our schedulability test, combining $\MRBF_i$ for $k$ steps with $\UB_i$ for higher values. However, the complexity of computing $k$ steps of \MRBF{} is much more complex than in~\cite{RBF} where computing $k$ steps is in $\mathcal{O}(k)$. We will need to bound the complexity of getting $k$ steps of \MRBF ; the previous item will be a first step before we get this result.
\item We also plan to adapt our technique to the more general model of Digraph tasks \cite{digraph}. We believe that it should not be very difficult, but we will need to adapt the proof of the critical instant and the algorithms.
\end{itemize}

\section*{Acknowledgment}
We would like to warmly thank Sanjoy Baruah, from the University of North Carolina (Chapel Hill), for posing the problem and for the very interesting and useful discussions we had together.

\bibliographystyle{acm}
\bibliography{RTSS2011-WiP.bib}

\end{document}